\newtheorem{theorem}{\hskip\parindent\bf{Theorem}}
\begin{document}

\title{\LARGE{Task and Bandwidth Allocation for UAV-Assisted Mobile Edge Computing with Trajectory Design}}

\author{\normalsize{Xiaoyan Hu$^{\dag}$, Kai-Kit Wong$^{\dag}$, Kun Yang$^{\ddag*}$ and Zhongbin Zheng$^{\S}$} \\
$^{\dag}$Department of Electronic and Electrical Engineering, University College London, London, UK\\
$^{\ddag}$School of Computer Science and Electronic Engineering, University of Essex, Colchester, UK \\
$^{*}$School of Communication and Information Engineering, UESTC, Chengdu, P. R. China\\
$^{\S}$East China Institute of Telecommunications, Shanghai, P. R. China\\
Email:$^{\dag}$\{xiaoyan.hu.16, kai-kit.wong\}@ucl.ac.uk, $^{\ddag*}$kunyang@\{essex.ac.uk, uestc.edu.cn\},$^{\S}$ben@ecit.org.cn
}
\maketitle
\vspace{-2cm}
\begin{abstract} 
In this paper, we investigate a mobile edge computing (MEC) architecture with the assistance of an unmanned aerial vehicle (UAV). The UAV acts as a computing server to help the user equipment (UEs) compute their tasks as well as  a relay to further offload the UEs' tasks to the access point (AP) for computing. The total energy consumption of the UAV and UEs is minimized by jointly optimizing the task allocation, the bandwidth allocation and the UAV's trajectory, subject to
the task constraints, the information-causality constraints, the bandwidth allocation constraints, and the UAV's trajectory constraints. The formulated optimization problem is nonconvex, and we propose an alternating algorithm to optimize the parameters iteratively. The effectiveness of the algorithm is verified by the simulation results, where great performance gain is achieved in comparison with some practical baselines, especially in handling the computation-intensive and latency-critical tasks.
\end{abstract}
\begin{IEEEkeywords}
Mobile edge computing, UAV, task allocation, bandwidth allocation, trajectory design.
\end{IEEEkeywords}

\IEEEpeerreviewmaketitle

\vspace{-4mm}
\section{Introduction}\label{sec:Introduction}
With the prevail of smart devices, various mobile applications spring up which demand  more and more computing resources and will definitely increase the burden of the user equipment (UEs). 
Mobile edge computing (MEC) aiming at liberating the resource-limited UEs from heavy computing workload through computation offloading, has been regarded as a key technology of future intelligent wireless networks \cite{S_P.Mach17Mobile,S_Y.Mao17ASurve}.
Recently, MEC has been widely used in cellular networks, focusing on improving the energy efficiency or reducing the latency of various cellular-based MEC systems \cite{S.Sardellitti.15Joint,J_C.You17Energy,T.Q.Dinh17Offloading,J_X.Hu18Edge,J_H.Sun19Joint,J_X.Hu18Wireless}. 

In order to further improve the system performance of edge computing, the technology of the unmanned aerial vehicle (UAV) has been leveraged in  MEC architectures, since the UAV communication has many attractive advantages, such as easy deployment, flexible movement, and line-of-sight (LoS) connections, etc.,  \cite{J_zeng16Wireless,J_M.M.Azari18Ultra,J_Xu18Uav}. 
The performance improvement of the UAV-enabled MEC architecture has been shown to be substantial \cite{J_Jeong18Mobile,J_F.Zhou18Computation}. 
 A UAV-based MEC system is investigated in \cite{J_Jeong18Mobile}, where a moving UAV helps UEs compute their offloaded tasks and the total mobile energy consumption is minimized by optimizing the task-bit allocation and the UAV's trajectory. Later in \cite{J_F.Zhou18Computation}, a wireless powered UAV-enabled MEC system is studied, where the UAV provides energy as well as MEC services for the UEs. The computation rate maximization problems are addressed by alternating algorithms under both the partial and binary computation offloading modes. 


The aforementioned MEC works concentrate either on the cellular-based MEC networks or the UAV-enabled MEC architectures by exploiting the computing capability solely at the AP or the UAV. However, it is impossible to take full use of the computing resources at the AP  if the UEs' links to the AP are seriously degraded. Besides, it is risky to rely only on the UAVs for completing UEs' computation-intensive latency-critical tasks since the UAVs are also resource-limited \cite{X.Hu18UAV}. For these reasons, this paper studies a UAV-assisted MEC architecture, where the computing resources at the UAV and the AP are utilized at the same time. In addition, the energy-efficient LoS transmissions of the UAV have been fully exploited since the UAV is not only served as a mobile computing server to help the UEs compute their tasks  but also as a relay to further offload UEs' tasks to the AP for computing.


\vspace{-2mm}
\section{System Model and Problem Formulation}\label{sec:system}
\subsection{System Model}\label{sec:Channel Model}
A UAV-assisted  MEC system is considered in this paper, which consists of an AP, a cellular-connected UAV, and $K$ ground UEs, all being equipped with a single antenna. The UAV and UEs are all assumed to have limited computing resources  powered by their embedded battery, while the AP  is endowed with an ultra-high performance processing server.
A three-dimensional (3D) Euclidean coordinate system is adopted, whose coordinates are measured in meters. The locations of the AP and all the UEs are fixed on the ground with zero altitude, denoted as $\mathbf{\widetilde{v}}_0=(x_0,y_0,0)$ and $\mathbf{\widetilde{v}}_k=(x_k,y_k,0)$ for UE $k\in\mathcal{K}=\{1,\dots,K\}$.  
We assume that the UAV flies at a fixed altitude $H>0$ during the task completion time $T$, which is the minimum altitude appropriate to the work terrain. 

For ease of exposition, the finite task completion time $T$ is discretized into $N$ equal time slots each with a duration of $\tau=T/N$,
where $\tau$ is sufficiently small such that the UAV's location can be assumed to be unchanged during each slot. 
The initial and final horizontal locations of the UAV are preset as $\mathbf{u}_\mathrm{I}=(x_\mathrm{I}, y_\mathrm{I})$ and $\mathbf{u}_\mathrm{F}=(x_\mathrm{F}, y_\mathrm{F})$. Let $\mathcal{N}=\{1,\dots,N\}$ denote the set of the $N$ time slots. At the $n$-th time slot, the UAV's horizontal location is denoted as $\mathbf{u}[n]\equiv\mathbf{u}(n\tau)=(x[n],y[n])$ with $\mathbf{u}[0]=\mathbf{u}_\mathrm{I}$ and $\mathbf{u}[N]=\mathbf{u}_\mathrm{F}$. It is assumed that the UAV flies with a constant speed in each time slot, denoted as $\mathbf{v}[n]$, which should satisfy $\|\mathbf{v}[n]\|=\frac{\|\mathbf{u}[n]-\mathbf{u}[n-1]\|}{\tau}\leq V_\mathrm{max}$,
with a predetermined maximum speed $V_\mathrm{max}\geq\|\mathbf{u}_\mathrm{F}-\mathbf{u}_\mathrm{I}\|/T$.

Similar to \cite{J_Zeng17Energy}, 
the wireless channels between the UAV and the AP as well as the UEs are assumed to be dominated by LoS links. Thus, the channel power gain between the UAV and the AP and between the UAV and  UE $k$ at the time slot $n$ can be, respectively, given by
\begin{align}
&h_{\mathrm{AP}}[n]=\frac{h_0}{\|\mathbf{u}[n]-\mathbf{v}_0\|^2+H^2}, \ n\in \mathcal{N}, \label{eq:channel_AP} \\
&h_k[n]=\frac{h_0}{\|\mathbf{u}[n]-\mathbf{v}_k\|^2+H^2}, \ k\in\mathcal{K}, n\in \mathcal{N}, \label{eq:channel_k}
\end{align}
where $h_0$ is the channel power gain at a reference distance of one meter; $\mathbf{v}_0=(x_0,y_0)$ and $\mathbf{v}_k=(x_k,y_k)$ denote the horizontal locations of the AP and UE $k\in\mathcal{K}$, respectively.
In this paper, the direct links between UEs and the AP are assumed to be negligible due to e.g., severe blockage, which means that the UEs cannot directly offload their task-input bits to the AP unless with the assistance of the UAV.

\vspace{-3mm}
\subsection{Computation Task Model and Energy Consumption}\label{sec:Task Model}
It is assumed that each UE has a  computation-intensive task, which is denoted as a positive tuple $[I_k, C_k, T_k]$. Here, $I_k$ denotes the size (in bits) of the computation task-input data which is bit-wise independent and can be arbitrarily divided, $C_k$ is the amount of required computing resource for computing 1-bit of input data,  and $T_k$ is the maximum tolerable latency with $T_k\leq T, k\in \mathcal{K}$.\footnote{ In our considered scenario, we assume that the output data sizes of the computation tasks are quite small that can be ignored especially compared with the input data sizes of the computation-intensive tasks.}
In this paper, we only consider the case that $T_k=T$ for all $k\in\mathcal{K}$.
The UAV acts as an assistant to  help the UEs complete their computation tasks by providing both  MEC and relaying services. For the MEC service, the UAV shares its computing resources with the UEs to help compute their tasks; while for the relaying service, the UAV forwards part of the UEs' offloaded tasks to the AP for computing.
Hence, the UEs can accomplish their computation tasks in a partial offloading fashion \cite{S_Y.Mao17ASurve} with the following three ways. 
\subsubsection{Local Computing at UEs}
Each UE can perform local computing and computation offloading simultaneously. 
To efficiently use the energy for local computing, the UEs leverage a dynamic voltage and frequency scaling (DVFS) technique to adaptively  adjust their CPU frequency in each slot \cite{J_Zhang13Energy}. 
The CPU frequency of  UE $k$ during time slot $n$ is denoted as $f_k[n]$ (cycles/second) for computing $l{_k^{\mathrm{loc}}}[n]=\tau f_k[n]/C_k$  task-input bits. Thus, the energy consumption of UE $k$ during time slot $n$ for local computing can be expressed as
\begin{align}
&E{_k^{\mathrm{loc}}}[n]=\tau\kappa_k f_k^3[n]\equiv\frac{\kappa_kC_k^3}{\tau^2} (l{_k^{\mathrm{loc}}}[n])^3, \ k\in\mathcal{K}, n\in \mathcal{N}, \label{eq:Local_energy_k}
\end{align}
where $\kappa_k$ is the effective capacitance coefficient of UE $k$ that depends on  its processor's chip architecture. 

\subsubsection{Task Offloaded to the UAV for Computing}
The UEs' remaining task-input data should be computed remotely, first by offloading to the UAV, and then one part of the data being computed at the UAV while the other part further offloaded to the AP for computing. In order to avoid interference among the UEs during the offloading process, we adopt the time-division multiple access (TDMA) protocol.
Each slot is further divided into $K$ equal durations $\delta=T/(NK)$, and UE $k$ offloads its task-input data in the $k$-th duration. Let $l_k^{\mathrm{off}}[n]$ denote the offloaded bits of UE $k$ in its allocated duration at time slot $n$, and thus  the corresponding energy consumption of UE $k$ at slot $n$ for computation offloading can be calculated  as
\begin{align}\label{eq:E_off_k}
E{_k^{\mathrm{off}}}[n]&=\delta p_k^{\mathrm{off}}[n]
\equiv\frac{\delta N_0}{h_k[n]}\bigg(2^{\frac{l_k^{\mathrm{off}}[n]}{\delta B{_k^\mathrm{off}}[n]}}-1\bigg), 
\end{align}
where $p_k^{\mathrm{off}}[n]$ and $B{_k^\mathrm{off}}[n]$ are
the related transmit power and bandwidth,
and $N_0$ denotes the noise power at the UAV.\footnote{Without loss of generality, we assume that the noise power at any node in the system is considered the same as $N_0$.} 

Assume that the UAV also adopts the DVFS technique to improve its energy efficiency for computing, and its adjustable CPU frequency in the $k$-th duration of slot $n$  is denoted as $f_{\mathrm{U},k}[n]$ for computing UE $k$'s  $l_{\mathrm{U},k}[n]=\delta f_{\mathrm{U},k}[n]/C_k$ offloaded task-input bits. Hence, the energy consumption of the UAV for computing UE $k$'s task at slot $n$ can be  given as 
\begin{align}
E_{\mathrm{U},k}[n]&=\delta\kappa_{\mathrm{U}} f_{\mathrm{U},k}^3[n]\equiv\frac{\kappa_{\mathrm{U}}C_k^3 }{\delta^2} l_{\mathrm{U},k}^3[n], \ k\in\mathcal{K}, n\in \mathcal{N}, \label{eq:UAV_energy_k}
\end{align}
where $\kappa_\mathrm{U}$ is the effective capacitance coefficient of the UAV.

\subsubsection{Task Offloaded to the AP for Computing}
Part of the UEs' offloaded task-input data at the UAV will be offloaded to the AP's processing server for computing. To better distinguish the offloading signals from different UEs, the TDMA protocol with $K$ equal time division ($\delta=T/(NK)$) is also adopted in this case. Let $l{_{\mathrm{U},k}^{\mathrm{off}}}[n]$ denote the number of UE $k$'s task-input bits being offloaded from the UAV to the AP at time slot $n$. Thus, the corresponding energy consumption of the  UAV for offloading UE $k$'s task at slot $n$ can be calculated as
\begin{align}\label{eq:E_UAV_off_k}
E{_{\mathrm{U},k}^{\mathrm{off}}}[n]&=\delta p{_{\mathrm{U},k}^{\mathrm{off}}}[n]
\equiv\frac{\delta N_0}{h_{\mathrm{AP}}[n]}\bigg(2^{\frac{l{_{\mathrm{U},k}^{\mathrm{off}}}[n]}{\delta B{_{\mathrm{U},k}^\mathrm{off}}[n]}}-1\bigg),
\end{align}
where $p{_{\mathrm{U},k}^{\mathrm{off}}}[n]$ and $B{_{\mathrm{U},k}^\mathrm{off}}[n]$ are the corresponding transmit power  and  the allocated bandwidth of the UAV.
As the AP is integrated with an ultra-high-performance processing server, the computing time is negligible. It is assumed that the UAV is equipped with a data buffer with sufficiently large size, and it is capable of storing each UE's offloaded data. 

In fact, the energy consumption for UAV's propulsion is also considerable which is greatly affected by the UAV's trajectory, and hence should be taken into account. 
With the assumption that the time slot duration $\tau$ is  sufficiently small, the UAV's flying during each slot can be regarded as  straight-and-level flight with constant speed $\mathbf{v}[n]$.
Taking a fixed-wing UAV as an example \cite{J_Zeng17Energy,J_Y.Zeng19Accessing}, its propulsion energy consumption
at time slot $n$ can be expressed as
\begin{align}\label{eq:E_fly}
E{_\mathrm{U}^{\mathrm{fly}}}[n]=\tau\bigg(\theta_1\|\mathbf{v}[n]\|^3+\frac{\theta_2}{\|\mathbf{v}[n]\|}\bigg), \ n\in \mathcal{N},
\end{align}
where $\theta_1$ and $\theta_2$ are two parameters related to the UAV's weight, wing area, wing span efficiency, and air density, etc.
Combining with the above analysis, we obtain the total energy consumption of UE $k$ and the UAV in each time slot $n$ as
\begin{align}
\hspace{-2.5mm}E_{k}[n]&=E{_k^{\mathrm{loc}}}[n]+E{_k^{\mathrm{off}}}[n], \ k\in\mathcal{K}, n\in \mathcal{N},\label{eq:E_k_n}\\
\hspace{-2.5mm}E_{\mathrm{U}}[n]&=\sum\limits_{k{\rm{ = }}1}^K \Big(E_{\mathrm{U},k}[n]+E{_{\mathrm{U},k}^{\mathrm{off}}}[n]
\Big)+E{_\mathrm{U}^{\mathrm{fly}}}[n],\ n\in \mathcal{N}.\label{eq:E_UAV_k_N}
\end{align}

\vspace{-6mm}

\subsection{Problem Formulation}\label{sec:problem}
Considering that the battery-based UEs and  UAVs are usually power-limited, one major problem the UAV-assisted MEC system faces is energy. Hence, in this paper, we try to minimize the total energy consumption (TEC) of the UAV and all the UEs during the task completion time $T$. In the previous subsection, we have obtained the energy consumption of the UEs and the UAV for task offloading and computation.
In our considered scenario, the UEs' task allocation parameters in $\mathbf{l}\triangleq\{l_{k}^{\mathrm{loc}}[n],l{_{k}^{\mathrm{off}}}[n], l_{\mathrm{U},k}[n], l{_{\mathrm{U},k}^{\mathrm{off}}}[n]\}_{k\in\mathcal{K},n\in\mathcal{N}}$, the bandwidth allocation parameters in $\mathbf{B}\triangleq\{B{_k^\mathrm{off}}[n],$ $B{_{\mathrm{U},k}^{\mathrm{off}}}[n]\}_{k\in\mathcal{K},n\in\mathcal{N}}$ as well as the UAV's trajectory parameters in $\mathbf{u}\triangleq\{\mathbf{u}[n]\}_{n\in\mathcal{N}}$ will be optimized to minimize the TEC.
To this end, the TEC minimization problem is formulated as problem (P1) below
\begin{subeqnarray}\label{eq:WSECM1} 
&&\hspace{-7mm} \underset{\mathbf{l},\mathbf{B},\mathbf{u}}{\min}\
 \sum\limits_{n{\rm{ = }}1}^N\bigg(E_{\mathrm{U}}[n]+\sum\limits_{k{\rm{ = }}1}^KE_k[n]\bigg)\slabel{eq:WSECM1_0}\\
&&\hspace{-7mm}\mathrm{s.t.}\
 \sum\limits_{i{\rm{ = }}2}^n \left(l_{\mathrm{U},k}[i]+l{_{\mathrm{U},k}^{\mathrm{off}}}[i]\right)\leq \sum\limits_{i{\rm{ = }}1}^{n-1} l{_{k}^{\mathrm{off}}}[i],n\in\mathcal{N}_2, k\in\mathcal{K}, \quad~~ \slabel{eq:WSECM1_1}\\
&&\hspace{-7mm} ~~~~\sum\limits_{n{\rm{ = }}2}^{N} \left(l_{\mathrm{U},k}[n]+l{_{\mathrm{U},k}^{\mathrm{off}}}[n]\right)= \sum\limits_{n{\rm{ = }}1}^{N-1} l{_{k}^{\mathrm{off}}}[n], \ k\in\mathcal{K}, \slabel{eq:WSECM1_4}\\
&&\hspace{-7mm} ~~~~\sum\limits_{n{\rm{ = }}1}^{N}l{_{k}^{\mathrm{loc}}}[n]+\sum\limits_{n{\rm{ = }}1}^{N-1} l{_{k}^{\mathrm{off}}}[n]=I_k,\ k\in\mathcal{K}, \slabel{eq:WSECM1_3}\\
&&\hspace{-7mm}~~~~ B{_k^\mathrm{off}}[n]+B{_{\mathrm{U},k}^{\mathrm{off}}}[n]=B, \ n\in\mathcal{N}, \ k\in\mathcal{K}, \slabel{eq:WSECM1_12}\\
&&\hspace{-7mm}~~~~ l_k^{\mathrm{loc}}[n]\geq0, \ n\in\mathcal{N}, \ k\in\mathcal{K}, \slabel{eq:WSECM1_6}\\
&&\hspace{-7mm} ~~~~l_k^{\mathrm{off}}[N]=0, \ l_k^{\mathrm{off}}[n]\geq0, \ n\in\mathcal{N}_1,\  k\in\mathcal{K}, \slabel{eq:WSECM1_11}\\
&&\hspace{-7mm} ~~~~l_{\mathrm{U},k}[1]=0,\ l_{\mathrm{U},k}[n]\geq0, \ n\in\mathcal{N}_2,\  k\in\mathcal{K}, \slabel{eq:WSECM1_14}\\ 
&&\hspace{-7mm}~~~~ l{_{\mathrm{U},k}^{\mathrm{off}}}[1]=0, \ l{_{\mathrm{U},k}^{\mathrm{off}}}[n]\geq0, \ n\in\mathcal{N}_2,\  k\in\mathcal{K}, \slabel{eq:WSECM1_7}\\ 
&&\hspace{-7mm}~~~~ B{_k^\mathrm{off}}[N]=0, B{_k^\mathrm{off}}[n]\geq0,\ n\in\mathcal{N}_1, k\in\mathcal{K},\slabel{eq:WSECM1_13}\\
&&\hspace{-7mm} ~~~~B{_{\mathrm{U},k}^{\mathrm{off}}}[1]=0, B{_{\mathrm{U},k}^{\mathrm{off}}}[n]\geq0,
\ n\in\mathcal{N}_2, \  k\in\mathcal{K},\slabel{eq:WSECM1_15}\\ 
&&\hspace{-7mm} ~~~~\mathbf{u}[0]=\mathbf{u}_\mathrm{I}, \ \mathbf{u}[N]=\mathbf{u}_\mathrm{F}, \slabel{eq:WSECM1_9}\\
&&\hspace{-7mm}~~~~ \|\mathbf{u}[n]-\mathbf{u}[n-1]\|\leq V_\mathrm{max}\tau, \ n\in\mathcal{N},  \quad \slabel{eq:WSECM1_10}
\end{subeqnarray}
where  $\mathcal{N}_1=\{1,\dots,N-1\}$.
In (P1), \eqref{eq:WSECM1_1} is an information-causality constraints, indicating that the UAV can only compute or forward the task-input data that has already been received from the UEs with one slot processing delay. Hence, the UEs should not offload at the last slot, while the UAV should not compute or forward the received UEs' data  at the first slot.
\eqref{eq:WSECM1_4} and \eqref{eq:WSECM1_3} are the UEs' computation task constraints to make sure that all the UEs' computation task-input data have been computed. The bandwidth constraints are in \eqref{eq:WSECM1_12}.
\eqref{eq:WSECM1_9} and \eqref{eq:WSECM1_10}  specify the UAV's initial and final horizontal locations, and its maximum speed constraints.

\vspace{-1mm}
\section{Algorithm Design}\label{algorithm_design1}
The problem (P1) is a complicated non-convex optimization problem because of the non-convex objective function where non-linear couplings exist among the variables $l{_{k}^{\mathrm{off}}}[n]$ and $B{_k^\mathrm{off}}[n]$, $l{_{\mathrm{U},k}^{\mathrm{off}}}[n]$ and $B{_{\mathrm{U},k}^{\mathrm{off}}}[n]$ which are further coupled with the trajectory of the UAV, i.e., $\mathbf{u}[n]$. To address these issues, we propose a three-step alternating optimization algorithm, where the task allocation $\mathbf{l}$, the bandwidth allocation $\mathbf{B}$, and the UAV's trajectory $\mathbf{u}$ are obtained alternatively.
The details for the three steps of the algorithm are presented as follows.

\vspace{-3mm}
\subsection{Task Allocation}\label{computation_resource_scheduling}
A sub-problem of (P1) is the computation task allocation problem (P1.1), where the UAV's trajectory $\mathbf{u}$ and bandwidth allocation $\mathbf{B}$ are given as fixed. In this case, the time-dependent channels  $\{h_{\mathrm{AP}}[n]\}_{n\in\mathcal{N}}$ and $\{h_k[n]\}_{k\in\mathcal{K},n\in\mathcal{N}}$ defined in \eqref{eq:channel_AP} and \eqref{eq:channel_k} are  known with the fixed $\mathbf{u}$. Besides, the non-linear couplings among the offloading task-input \mbox{bits} ($l{_{k}^{\mathrm{off}}}[n], l{_{\mathrm{U},k}^{\mathrm{off}}}[n]$) with their corresponding allocated bandwidths ($B{_k^\mathrm{off}}[n], B{_{\mathrm{U},k}^{\mathrm{off}}}[n]$) no longer exist. The computation task allocation problem (P1.1) is convex with a convex objective function and constraints, which is expressed as
\begin{subeqnarray}\label{eq:RS1}
 ({\rm P1.1}):
 \underset{\mathbf{l}}{\min} &&\hspace{-4mm}
  \sum\limits_{n{\rm{ = }}1}^N\left(E{_{\mathrm{U}}^{(1)}}[n]+\sum\limits_{k{\rm{ = }}1}^KE_k[n]\right) \quad\quad\slabel{eq:RS1_0}\\
\mathrm{s.t.}
&&\hspace{-4mm} \eqref{eq:WSECM1_1}-\eqref{eq:WSECM1_3}, \ \eqref{eq:WSECM1_6}-\eqref{eq:WSECM1_7}, \slabel{eq:RS1_1} 
\end{subeqnarray}
where $E{_{\mathrm{U}}^{(1)}}[n]=\sum\limits_{k{\rm{ = }}1}^K \Big(E_{\mathrm{U},k}[n]+E{_{\mathrm{U},k}^{\mathrm{off}}}[n]\Big)$.
In order to gain more insights of the solution, we leverage the Lagrange method  \cite{B_Boyd04Convex} to solve problem (P1.1), and the optimal solution of problem (P1.1) is given in the following theorem.

\begin{theorem}\label{theorem_1}
The optimal solution of problem (P1.1) related to UE $k\in\mathcal{K}$ is given in \eqref{eq:f_k}--\eqref{eq:l_off_Uk} below

\vspace{-2mm}
{\small{
\begin{align}
&\hspace{-1mm}l{_{k}^{\mathrm{loc}*}}[n]=\frac{\tau}{C_k}\sqrt{\frac{[\beta{_k^*}]^+}{3C_k\kappa_k}}, \  n\in \mathcal{N}, \label{eq:f_k}\\
&\hspace{-1mm}l{_{k}^{\mathrm{off}*}}[n]=\left\{
\begin{aligned}
&\delta B{_k^{\mathrm{off}}}[n]\bigg[\log_2\rho_k^{\mathrm{off}}[n]\Big[\widehat{\lambda}_{k,n}^*+\beta{_k^*}-\eta{_k^*}\Big]^+\bigg]^+, n\in \mathcal{N}_1, \\
&0, \quad n=N, \\
\end{aligned}\right. \label{eq:l_k}\\
&\hspace{-1mm}l{_{\mathrm{U},k}^*}[n]=\left\{
\begin{aligned}
&\frac{\delta}{C_k}\sqrt{\frac{ \left[\eta{_k^*}-\widetilde{\lambda}_{k,n}^*\right]^+}{3C_k\kappa_\mathrm{U}}}, n\in \mathcal{N}_2, \\
&0, \quad n=1, \\
\end{aligned}\right. \label{eq:f_Uk}\\
&\hspace{-1mm}l{_{\mathrm{U},k}^{\mathrm{off}*}}[n]=\left\{
\begin{aligned}
&\delta B{_{\mathrm{U},k}^{\mathrm{off}}}[n]\left[\log_2\rho_{\mathrm{U},k}^{\mathrm{off}}\left[\eta{_k^*}-\widetilde{\lambda}_{k,n}^* \right]^+\right]^+, n\in \mathcal{N}_2,  \\
&0, \quad n=1, \\
\end{aligned}\right. \label{eq:l_off_Uk}
\end{align}
}}
\hspace{-1.3mm}where $\rho_k^{\mathrm{off}}[n]=\frac{B{_k^{\mathrm{off}}}[n]h_k[n]}{N_0\ln2}$, $\rho_{\mathrm{U},k}^{\mathrm{off}}=\frac{B{_{\mathrm{U},k}
^{\mathrm{off}}}[n]h_{\mathrm{AP}}[n]}{N_0\ln2}$,
$\widehat{\lambda}_{k,n}^*=\sum_{i{\rm{ = }}n+1}^{N-1}\lambda{_{k,i}^*}$ and $\widetilde{\lambda}_{k,n}^*=\sum_{i{\rm{ = }}n}^{N-1}\lambda{_{k,i}^*}$.
Here, $\lambda{_{k,n}^*}\geq0$  for $k\in\mathcal{K}, n\in\mathcal{N}$ are the optimal Lagrange multipliers (dual variables) associated with the inequality constraints \eqref{eq:WSECM1_1}  in problem (P1.1) (or P1), while  $\eta_k^*$ and $\beta^*_k$  are respectively the optimal Lagrange multipliers  associated with the equality constraints \eqref{eq:WSECM1_4} and \eqref{eq:WSECM1_3} for $k\in\mathcal{K}$.
\end{theorem}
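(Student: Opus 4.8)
The plan is to exploit the convexity of (P1.1) already established in the text: since the objective is a sum of cubic and exponential terms in nonnegative variables and all constraints are affine, the KKT conditions are both necessary and sufficient for global optimality, so it suffices to exhibit a primal--dual point satisfying stationarity, primal and dual feasibility, and complementary slackness. First I would form the Lagrangian, attaching the nonnegative multipliers $\lambda_{k,n}$ to the causality inequalities \eqref{eq:WSECM1_1}, a free multiplier $\eta_k$ to the relay-balance equality \eqref{eq:WSECM1_4}, and a free multiplier $\beta_k$ to the task-conservation equality \eqref{eq:WSECM1_3}, while keeping the nonnegativity constraints \eqref{eq:WSECM1_6}--\eqref{eq:WSECM1_7} implicit so that they resurface as the outer projections $[\cdot]^+$ in the final expressions. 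The sign convention must be fixed at the outset---writing \eqref{eq:WSECM1_3} as $I_k-\sum_n l_k^{\mathrm{loc}}[n]-\sum_n l_k^{\mathrm{off}}[n]$ and \eqref{eq:WSECM1_4} as $\sum_n l_k^{\mathrm{off}}[n]-\sum_n(l_{\mathrm{U},k}[n]+l_{\mathrm{U},k}^{\mathrm{off}}[n])$---so that the multiplier combinations emerge with exactly the signs shown in \eqref{eq:f_k}--\eqref{eq:l_off_Uk}.

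The core of the derivation is then to differentiate the Lagrangian with respect to each of the four families of variables and set the derivative to zero. The two computing variables $l_k^{\mathrm{loc}}[n]$ and $l_{\mathrm{U},k}[n]$ enter the objective through cubic terms, so their stationarity conditions are quadratic and invert to the square-root forms \eqref{eq:f_k} and \eqref{eq:f_Uk}; the two offloading variables $l_k^{\mathrm{off}}[n]$ and $l_{\mathrm{U},k}^{\mathrm{off}}[n]$ enter through exponential terms, so their stationarity conditions invert to the logarithmic, water-filling-type forms \eqref{eq:l_k} and \eqref{eq:l_off_Uk}, with $\rho_k^{\mathrm{off}}[n]$ and $\rho_{\mathrm{U},k}^{\mathrm{off}}$ collecting the channel and bandwidth constants. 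Complementary slackness together with the retained nonnegativity then produces the outer $[\cdot]^+$ on every variable, and for the offloading variables the additional requirement that the argument of the logarithm be positive produces the inner $[\cdot]^+$; the forced-zero boundary values at $n=N$ and $n=1$ are read directly off the problem constraints \eqref{eq:WSECM1_11}--\eqref{eq:WSECM1_15}.

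The step I expect to be the main obstacle is the bookkeeping of how each variable couples across the \emph{cumulative} causality constraints \eqref{eq:WSECM1_1}. Because the constraint at index $n$ contains the partial sums $\sum_{i=2}^{n}(l_{\mathrm{U},k}[i]+l_{\mathrm{U},k}^{\mathrm{off}}[i])$ and $\sum_{i=1}^{n-1}l_k^{\mathrm{off}}[i]$, a single variable at a given slot appears in a whole tail of these constraints, so differentiating it picks up a sum of the $\lambda_{k,n}$ over a slot-dependent index range. Getting these ranges exactly right is what produces the two aggregated duals $\widehat{\lambda}_{k,n}^*=\sum_{i=n+1}^{N-1}\lambda_{k,i}^*$ (for $l_k^{\mathrm{off}}[n]$, which appears on the right-hand side only from constraint index $n+1$ onward) and $\widetilde{\lambda}_{k,n}^*=\sum_{i=n}^{N-1}\lambda_{k,i}^*$ (for the UAV variables at slot $n$, which appear on the left-hand side from index $n$ onward), and a mismatch of one in either limit would corrupt all four closed forms. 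Once these aggregated multipliers are identified and the sign convention is held consistent, the four expressions follow by routine inversion, and dual feasibility $\lambda_{k,n}^*\ge 0$ together with complementary slackness confirms that the exhibited primal--dual point is the optimum.
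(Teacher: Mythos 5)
Your proposal is correct and follows exactly the route the paper takes: the paper itself omits a written proof of Theorem~\ref{theorem_1} but states that it "leverages the Lagrange method," and the analogous Appendix~A proof of Theorem~\ref{theorem_2} proceeds precisely as you describe --- form the (partial) Lagrangian, apply the KKT conditions, and invert the stationarity equations, with the nonnegativity constraints surfacing as the $[\cdot]^+$ projections. Your identification of the aggregated duals $\widehat{\lambda}_{k,n}^*$ and $\widetilde{\lambda}_{k,n}^*$ from the tails of the cumulative causality constraints \eqref{eq:WSECM1_1} (with the slot-$N$ causality constraint rendered redundant by the equality \eqref{eq:WSECM1_4}, so the sums terminate at $N-1$) is the right bookkeeping and reproduces \eqref{eq:f_k}--\eqref{eq:l_off_Uk}.
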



It is necessary to obtain the optimal values of the Lagrange multipliers, i.e., $\boldsymbol{\lambda}^*=\{\lambda{_{k,n}^*}\}_{k\in\mathcal{K}, n\in\mathcal{N}}$, $\boldsymbol{\eta}^*=\{\eta{_{k}^*}\}_{k\in\mathcal{K}}$ and  $\boldsymbol{\beta}^*=\{\beta{_{k}^*}\}_{k\in\mathcal{K}}$  since they play important roles in determining the optimal task allocation $\mathbf{l}^*$ according to Theorem \ref{theorem_1}. In this paper, we adopt a subgradient-based algorithm to obtain the optimal dual variables in  $\boldsymbol{\lambda}^*$  related to the inequality constraints \eqref{eq:WSECM1_1} \cite{B_D.Bertsekas89Parallel}.
Besides,  the optimal dual variables in $\boldsymbol{\eta}^*$ and  $\boldsymbol{\beta}^*$  related to the equality constraints \eqref{eq:WSECM1_4} and \eqref{eq:WSECM1_3} can be obtained by bi-section search method.
Note that  the  convergence can be guaranteed according to \cite{B_Boyd04Convex}.

\vspace{-3mm}
\subsection{Bandwidth Allocation}\label{bandwidth_allocation}
Here, another sub-problem of (P1), denoted as the bandwidth allocation problem (P1.2) is considered to optimize $\mathbf{B}$ with the same given UAV's trajectory $\mathbf{u}$  and the optimized computation task allocation parameters in $\mathbf{l}$.
The bandwidth allocation problem (P1.2) is expressed as
\begin{subeqnarray}\label{eq:BA}
 ({\rm P1.2}):
 \underset{\mathbf{B}}{\min} &&\hspace{-4mm}
  \sum\limits_{n{\rm{ = }}1}^N\sum\limits_{k{\rm{ = }}1}^K
  \left(E{_k^{\mathrm{off}}}[n]+E{_{\mathrm{U},k}^{\mathrm{off}}}[n]\right) \quad\quad \slabel{eq:BA_0}\\
\mathrm{s.t.}
&&\hspace{-4mm} \eqref{eq:WSECM1_12},~ \eqref{eq:WSECM1_13}, ~\eqref{eq:WSECM1_15}. \slabel{eq:BA_1} 
\end{subeqnarray}
It can be easily proved that problem (P1.2) is convex with  convex objective function and constraints. To gain more insights on the structure of the optimal solution, we again leverage the Lagrange method \cite{B_Boyd04Convex} to solve this problem, and the optimal solution to problem (P1.2) is given in the following theorem.

\begin{theorem}\label{theorem_2}
The optimal solution of problem (P1.2)  related to UE $k\in\mathcal{K}$ is given by
{\small{
\begin{align}
\hspace{-2mm}B{_k^{\mathrm{off}*}}[n]&=\left\{
\begin{aligned}
&\frac{\frac{\ln2}{2}l_k[n]}{\delta W_0\Big[\frac{\ln2}{2}
\big( \frac{\nu{_{k,n}^*}h_k[n]l_k^{\mathrm{off}}[n]}{\delta^2N_0\ln2}\big)^{\frac{1}{2}}\Big]},~n\in\mathcal{N}_1, \\
&0, \quad n=N-1~\mathrm{or}~N, \\
\end{aligned}\right. \label{B_k_off}\\
\hspace{-2mm}B{_{\mathrm{U},k}^{\mathrm{off}*}}[n]&=\left\{
\begin{aligned}
&\frac{\frac{\ln2}{2}l{_{\mathrm{U},k}^{\mathrm{off}}}[n]}{\delta W_0
\Big[\frac{\ln2}{2}\big(\frac{\nu{_{k,n}^*}h_{\mathrm{AP}}[n]l{_{\mathrm{U},k}^{\mathrm{off}}}[n]}{\delta^2N_0\ln2}\big)
^{\frac{1}{2}}\Big]},~n\in\mathcal{N}_2, \\
&0, \quad n=1~\mathrm{or}~N, \\
\end{aligned}\right. \label{B_Uk_off}
\end{align}
}}
\hspace{-1mm}where $\{\nu{_{k,n}^*}\}_{k\in\mathcal{K}, n\in\mathcal{N}}$  are the optimal Lagrange multipliers associated with the equality constraints in \eqref{eq:WSECM1_12} of problem (P1.2), and  $W_0(x)$ is the principal branch of the Lambert $W$ function defined as the solution of $W_0(x)e^{W_0(x)}=x$ \cite{J_Corless1996LambertW}.
\end{theorem}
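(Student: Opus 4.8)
The plan is to characterize the optimum of (P1.2) through its KKT conditions, which are both necessary and sufficient here: the objective \eqref{eq:BA_0} is convex in $\mathbf{B}$ because, for a fixed nonnegative bit load, each term of the form $2^{l/(\delta B)}$ in \eqref{eq:E_off_k} is convex in $B>0$, while the budget, boundary, and sign constraints \eqref{eq:WSECM1_12}, \eqref{eq:WSECM1_13}, \eqref{eq:WSECM1_15} are all affine, so Slater's condition holds and strong duality applies. The first structural observation I would exploit is that the objective separates over the index pairs $(k,n)$, and the only coupling between $B{_k^\mathrm{off}}[n]$ and $B{_{\mathrm{U},k}^{\mathrm{off}}}[n]$ is the single equality $B{_k^\mathrm{off}}[n]+B{_{\mathrm{U},k}^{\mathrm{off}}}[n]=B$. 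Hence (P1.2) decomposes into $KN$ independent two-variable problems, one per pair, each carrying its own multiplier $\nu{_{k,n}^*}$.

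For a fixed pair $(k,n)$ I would attach $\nu{_{k,n}^*}$ to \eqref{eq:WSECM1_12} and nonnegative multipliers to the sign constraints, then impose stationarity in $B{_k^\mathrm{off}}[n]$ and $B{_{\mathrm{U},k}^{\mathrm{off}}}[n]$. Differentiating \eqref{eq:E_off_k} gives
\[
\frac{\partial E{_k^{\mathrm{off}}}[n]}{\partial B{_k^\mathrm{off}}[n]}=-\frac{N_0\ln2\,l_k^{\mathrm{off}}[n]}{h_k[n]\,(B{_k^\mathrm{off}}[n])^2}\,2^{\frac{l_k^{\mathrm{off}}[n]}{\delta B{_k^\mathrm{off}}[n]}},
\]
so at an interior optimum the condition reduces to $\nu{_{k,n}^*}=-\partial E{_k^{\mathrm{off}}}[n]/\partial B{_k^\mathrm{off}}[n]$, a transcendental equation in $B{_k^\mathrm{off}}[n]$; the analogous identity with $h_{\mathrm{AP}}[n]$ and $l{_{\mathrm{U},k}^{\mathrm{off}}}[n]$ governs $B{_{\mathrm{U},k}^{\mathrm{off}}}[n]$.

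The crux of the argument, and the step I expect to be the main obstacle, is reducing this transcendental stationarity equation to closed form. I would substitute $y=l_k^{\mathrm{off}}[n]/(\delta B{_k^\mathrm{off}}[n])$ to rewrite the equation as $y^{2}2^{y}=\nu{_{k,n}^*}h_k[n]l_k^{\mathrm{off}}[n]/(N_0\ln2\,\delta^{2})$, take the positive square root (legitimate since $y>0$), and set $z=\tfrac{\ln2}{2}y$. This recasts the relation in the canonical form $z\,e^{z}=\tfrac{\ln2}{2}\big(\nu{_{k,n}^*}h_k[n]l_k^{\mathrm{off}}[n]/(\delta^{2}N_0\ln2)\big)^{1/2}$, whence $z=W_0(\cdot)$ by the defining identity $W_0(x)e^{W_0(x)}=x$; one must keep the principal branch, which is the correct one because the argument is nonnegative. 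Back-substituting $z=\tfrac{\ln2}{2}\,l_k^{\mathrm{off}}[n]/(\delta B{_k^\mathrm{off}}[n])$ then produces exactly \eqref{B_k_off}, and the identical manipulation with the AP channel yields \eqref{B_Uk_off}.

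Finally I would pin down $\nu{_{k,n}^*}$ by inserting the two closed-form expressions into the budget equality \eqref{eq:WSECM1_12}: since $B{_k^\mathrm{off}*}[n]$ and $B{_{\mathrm{U},k}^{\mathrm{off}}*}[n]$ are each monotone in $\nu{_{k,n}^*}$, their sum meets $B$ at a unique value locatable by bisection. The degenerate zero-bandwidth entries in \eqref{B_k_off}--\eqref{B_Uk_off} are treated separately: they coincide with the slots on which the relevant offloaded load is zero, either by the explicit constraints \eqref{eq:WSECM1_13} and \eqref{eq:WSECM1_15} or through the zero bit-loads \eqref{eq:WSECM1_11}, \eqref{eq:WSECM1_7} inherited from the fixed task allocation of Theorem \ref{theorem_1}; on any such slot the offloading energy vanishes for every bandwidth, so it is optimal to assign the entire budget $B$ to the partner link, giving the stated $B{_k^\mathrm{off}*}[n]=0$ and $B{_{\mathrm{U},k}^{\mathrm{off}}*}[n]=0$ cases.
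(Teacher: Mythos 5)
Your proposal is correct and follows essentially the same route as the paper's Appendix A: form the Lagrangian with multipliers $\nu_{k,n}$ on the budget equalities, write the stationarity condition $\nu_{k,n}^*=\frac{l_k^{\mathrm{off}}[n]N_0\ln2}{(B_k^{\mathrm{off}*}[n])^2h_k[n]}2^{l_k^{\mathrm{off}}[n]/(\delta B_k^{\mathrm{off}*}[n])}$, substitute $\xi=l_k^{\mathrm{off}}[n]/(\delta B_k^{\mathrm{off}*}[n])$ to get $\xi^2 2^{\xi}=\Gamma$, and invert via the principal branch of the Lambert $W$ function. Your added remarks on the per-$(k,n)$ decomposition, Slater's condition, and the degenerate zero-load slots are sound elaborations of points the paper leaves implicit.
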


\begin{proof}
See Appendix A.
\end{proof}

The optimal Lagrange multipliers $\{\nu{_{k,n}^*}\}$  in Theorem \ref{theorem_2}  should make the  the optimal bandwidth allocation parameters satisfy the equality constraints in \eqref{eq:WSECM1_12}. In fact, $\{\nu{_{k,n}^*}\}$ can be obtained effectively with the bi-section search when the bandwidth is not exclusively occupied, i.e., both $l_k^{\mathrm{off}}[n]$ and  $l{_{\mathrm{U},k}^{\mathrm{off}}}[n]$ are positive,  since $\{B{_k^{\mathrm{off}*}}[n]\}_{n\in\mathcal{N}_1}$ and $\{B{_{\mathrm{U},k}^{\mathrm{off}*}}[n]\}_{n\in\mathcal{N}_2}$  are all monotonically decreasing functions with respect to (w.r.t.) $\{\nu{_{k,n}^*}\}$ according to the property of the $W_0$ function.

\vspace{-2mm}
\subsection{UAV Trajectory Design}\label{UAV_trajectory_design}
Here, the sub-problem for designing the UAV's trajectory $\mathbf{u}$ is considered, which we refer to it as the UAV trajectory design problem (P1.3), by assuming that the computation resource scheduling  $\mathbf{z}$  and bandwidth allocation $\mathbf{B}$ are given as fixed with the previously optimized values. Hence, the  UAV trajectory design problem (P1.3) can be rewritten as
\begin{subeqnarray}\label{eq:UAV}
 ({\rm P1.3}):
 \underset{\mathbf{u}}{\min} &&\hspace{-4mm}
  \sum\limits_{n{\rm{ = }}1}^N\left(E{_{\mathrm{U}}^{(3)}}[n]+\sum\limits_{k{\rm{ = }}1}^KE{_k^{\mathrm{off}}}[n]\right)\quad\quad \slabel{eq:UAV_0}\\
\mathrm{s.t.}
&&\hspace{-4mm} \eqref{eq:WSECM1_9},~\eqref{eq:WSECM1_10}, \slabel{eq:UAV_1} 
\end{subeqnarray}
where $E{_{\mathrm{U}}^{(3)}}[n]=E{_\mathrm{U}^{\mathrm{fly}}}[n]+
\sum\limits_{k{\rm{ = }}1}^K E{_{\mathrm{U},k}^{\mathrm{off}}}[n]$.
Note that the $E{_\mathrm{U}^{\mathrm{fly}}}[n]$ defined in \eqref{eq:E_fly}  is not a convex function of $\mathbf{u}$. In order to address this issue, we define an upper bound of $E{_\mathrm{U}^{\mathrm{fly}}}[n]$ as follows
\begin{align}\label{eq:E_fly_up}
\widetilde{E}{_\mathrm{U}^{\mathrm{fly}}}[n]=\tau\bigg(\theta_1\|\mathbf{v}[n]\|^3+\frac{\theta_2}{\widetilde{v}[n]}\bigg), \ n\in \mathcal{N},
\end{align}
by introducing a variable $\widetilde{v}[n]$ and a constraint $\|\mathbf{v}[n]\|\geq \widetilde{v}[n]$, which is equivalent to $\|\mathbf{u}[n]-\mathbf{u}[n-1]\|^2\geq \widetilde{v}^2[n]\tau^2$.
This constraint  is still non-convex, and we leverage the successive convex approximation (SCA) technique to solve this issue. 
The left hand side of the constraint is convex versus $\mathbf{u}$ and can be  approximated as its linear lower bound by using the first-order Taylor expansion at a local point $\mathbf{u}_i$, where $i=1,2,\dots$ denotes the iteration index of the SCA method. Hence, the additional constraint can be approximated as a convex one as follows
\begin{align}\label{eq:speed_n_app}
&\widetilde{v}^2[n]\tau^2-2(\mathbf{u}_i[n]-\mathbf{u}_i[n-1])^T(\mathbf{u}[n]-\mathbf{u}[n-1]) \\ \nonumber
&\leq \|\mathbf{u}_i[n]-\mathbf{u}_i[n-1]\|^2, \ n\in \mathcal{N}.
\end{align}
The approximated problem of (P1.3) with $\{\widetilde{E}{_\mathrm{U}^{\mathrm{fly}}}[n]\}$, $\{\widetilde{v}[n]\}$ and the constraint \eqref{eq:speed_n_app} is convex w.r.t.~$\mathbf{u}$ and $\{\widetilde{v}[n]\}$, and we can resort to the software CVX \cite{M_Grant08CVX} to solve it.

\vspace{-2mm}
{\small{
\setlength{\tabcolsep}{0.3 pt}\begin{table}[thb]
\centering
\caption{Simulation Parameters}\label{table1}
\begin{tabular}{|l|l|l|}
\hline
~\textbf{Parameter }&~{\textbf{Symbol}} &~{\textbf{Value}} \\
\hline
~System bandwidth\quad~   &~$B$ \quad\quad &~20 MHz \quad\quad\quad\quad \\
\hline
~Task completion time  &~$T$ &~10 seconds\\
\hline
~Number of time slots &~$N$ &~50  \\
\hline
~Number of ground UEs &~$K$ &~4  \\
\hline
~The channel power gain& ~$h_0$ &~$-30\mathrm{dB}$ \\
\hline
~The noise power  &~$N_0$ &~$-60$dBm \\ 
\hline
~Altitude of the UAV&~$H$ &~10 m \\
\hline
~UAV's maximum speed&~$V_{\mathrm{max}}$ &~10 m/s \\
\hline
~UAV's propulsion parameters&~($\theta_1,\theta_2$) &~(0.00614,15.976)\\
\hline
~The switched capacitance &~$\kappa_\mathrm{U}$, $\kappa_k (k\in\mathcal{K})$ &~$10^{-28}$ \\
\hline
~UEs' task-input data size &~$I_k~(k\in\mathcal{K})$ &~400 Mbits \\
\hline
~Required CPU cycles per bit &~$C_k~(k\in\mathcal{K})$ &~1000 cycles/bit \\
\hline
~The tolerant thresholds &~$\epsilon_1$ and $\epsilon$ &~$10^{-4}$  \\
\hline
\end{tabular}
\end{table}
}}

\section{Simulation Results}\label{sec:simulation}

In this section, simulation results are presented to evaluate the effectiveness of the proposed algorithm in comparison with some practical baselines.
The basic simulation parameters are listed in Table~\ref{table1} unless specified otherwise. Besides, the initial and final positions of the UAV  are set as $\mathbf{u}_\mathrm{I}=(-5,-5)$, $\mathbf{u}_\mathrm{F}=(5,-5)$, and the horizontal positions of the UEs are set as $[\mathbf{v}_1$, $\mathbf{v}_2$, $\mathbf{v}_3$, $\mathbf{v}_4]=[(5,5), (-5,5), (-5,-5), (-5,5)]$.

\subsection{Trajectory of the UAV}\label{sec:Trajectory}
\begin{figure}[tbp]
\centering
\includegraphics[scale=0.4]{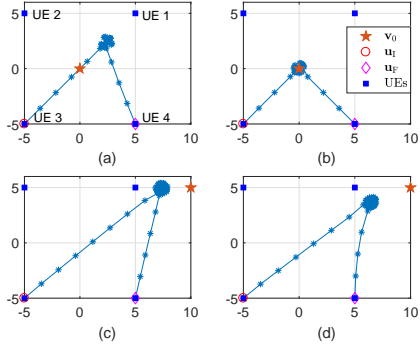}
\caption{The trajectories of the UAV in the situations with \mbox{different} location of the AP and task size allocation of the UEs:   $\mathbf{v}_0=(0,0)$ for (a) and (b), $\mathbf{v}_0=(10,5)$ for (c) and (d);  $[I_1,I_2,I_3,I_4]=[6,2,4,2]\times10^2$Mbits for (a) and (c), $[I_1,I_2,I_3,I_4]=[6,4,6,2]\times10^2$Mbits for (b) and (d).}
\label{Trajectory}
\end{figure}

In this subsection, numerical results for the trajectory of the UAV are given to shed light on the effects of  the task sizes of UEs
($[I_1,I_2,I_3,I_4]$) and the relative location of the AP ($\mathbf{v}_0$), as depicted in Fig. \ref{Trajectory}.
For the scenario of $\mathbf{v}_0=(0,0)$, the AP is surrounded by the UEs and at the center of the UEs' distributed area. We can observe that the UAV tends to fly close to the UEs with large task sizes and tries to be not too far away from the AP when the total task sizes of UEs are moderate as the results in cases (a). When the total task size becomes larger and the distribution of UEs' task sizes becomes  more average, the UAV tends to fly close to the AP as the result in case (b).
These two cases  indicate that for the scenario where the AP is located at the center of UEs' distributed area, the distribution of the UEs' task sizes plays an important role on the UAV's trajectory, while the effect of the AP's location will become more dominant when the UEs' total task size becomes larger, which coincides with the intuition that more task-input data will be offloaded to the AP in this situation so as to reduce the TEC by making use of the super computing resources at the AP. For the scenario of $\mathbf{v}_0=(10,5)$, the AP is located outside the distributed area of the UEs and its average distance to the UEs is relatively larger than the above scenario. 
In this situation, the effects of AP's location on the trajectories are more prominent, where the comparison between (a) and (c), (b) and (d), can properly explain this.

The reason behind these results in Fig.~\ref{Trajectory} is that there exists a tradeoff between the distribution of UEs' task sizes and the relative location of the AP to the UEs. In other words, getting close to the UEs with large task sizes can reduce UEs' offloading energy consumption, while being closer to the AP will reduce the UAV's offloading energy consumption, and thus the UAV has to find a balance between these two factors meanwhile taking its own flying energy consumption into consideration, so as to minimize the TEC through optimizing its flying trajectory.
\vspace{-3mm}
\subsection{Performance Improvement}\label{sec:Performance}
Here, we focus on the performance improvement of the proposed algorithm in comparison with some baseline schemes in the scenario of $\mathbf{v}_0=(0,0)$. The baseline schemes include the the ``Local Computing" scheme where the UEs rely on their own computing resources without task offloading;
the ``Direct Trajectory" scheme where the UAV flies from its initial location to the final location directly with an average speed;
the ``Offloading Only" scheme where the UEs just rely on task offloading to the UAV and the AP for computing without any local computing; the ``Equal Bandwidth" scheme where the whole bandwidth are equally divided for UEs and the UAV's offloading without bandwidth optimization. 

\begin{figure}[tbp]
\centering
\includegraphics[scale=0.4]{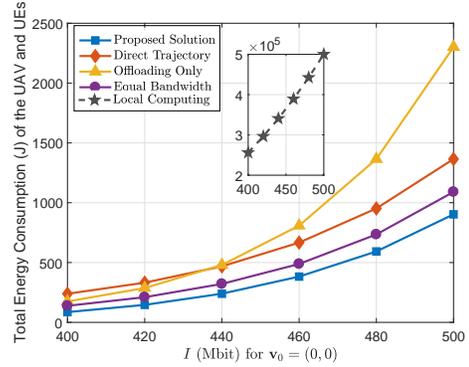}
\caption{The TEC of the UAV and UEs versus the uniform task size: $I=I_k$ for $k\in\mathcal{K}$.}
\label{Energy_I}
\end{figure}

Fig.~\ref{Energy_I} shows the TEC results versus the uniform task size $I=I_k$ for $k\in\mathcal{K}$. All the curves  increase with $I$ as expected since more energy will be consumed by completing tasks with more input data. It can be seen that great performance improvement can be achieved by leveraging the proposed solution in comparison with all the baseline  schemes.
Specifically, the TEC of the ``Proposed Solution" are almost one thousandth of that for the ``Local Computing" scheme, presenting the tremendous benefits the UEs obtained by deploying the UAV as an assistant for computing and  relaying.
In addition, the TEC of the proposed solution are almost third less than those of the other offloading schemes for $I<460$ Mbits.
Moreover, the gaps between the proposed solution and the baseline schemes become larger as $I$ increases, and  the TEC of the proposed solution is  less than  half of the ``Offloading only" scheme when $I=500$ Mbits. All these results verify that the effectiveness of the proposed optimization on task allocation, bandwidth allocation and UAV's trajectory, which make the proposed algorithm superior in dealing with the computation-intensive tasks.


\begin{figure}[tbp]
\centering
\includegraphics[scale=0.4]{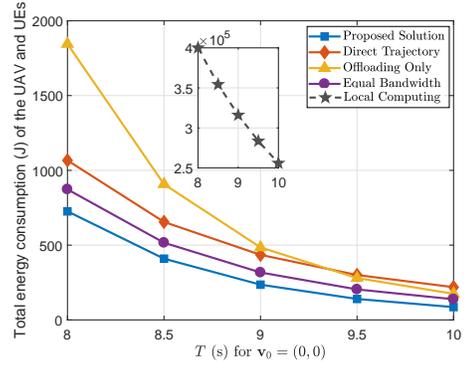}
\caption{The TEC  of the UAV and UEs versus the total task completion time: $T$ (s).}
\label{Energy_T}
\end{figure}

In Fig.~\ref{Energy_T}, the TEC  w.r.t.~the total task completion time $T$ is depicted.
We can see that the TECs of all the schemes decrease with $T$, coinciding with the intuition that a tradeoff exists between the energy consumption and time consumption for completing the same tasks, and the energy consumption will decrease when the consumed time increases.  It is notable that the proposed solution performs better than the four baseline schemes, and the performance improvement is even more prominent with strict time restriction (small $T$), which further confirms that the proposed algorithm is good at dealing with the latency-critical computation tasks and can achieve a better energy-delay tradeoff. 


\section{Conclusion}\label{sec:conclusion}
The UAV-assisted MEC architecture is studied in this paper, where the UAV acts as an MEC server and a relay to help the UEs compute their tasks or further offload their tasks to the AP for computing. The problem of minimizing the TEC of the UAV and the UEs  is solved  by an alternating algorithm, where  the task allocation, the bandwidth allocation, and the UAV's trajectory are iteratively optimized under some practical constraints. The simulation results have confirmed that
significant performance improvement can be achieved by  the proposed algorithm over the baseline schemes especially in handling the computation-intensive and latency-critical tasks.

\section*{Acknowledgment}\label{sec:acknowledgment}
This work is supported  by the U.K. EPSRC under Grants EP/K015893/1.

\section*{Appendix A: Proof of {Theorem}~\ref{theorem_2}}
\label{App:theo_2}
\renewcommand{\theequation}{A.\arabic{equation}}
\setcounter{equation}{0}

The partial Lagrange function of (P1.2) is defined as
{\small{
\begin{align}\label{eq:L_P1_2}
  \mathcal{L}^{(2)}(\mathbf{B},\boldsymbol{\nu})= &
  \sum\limits_{k{\rm{ = }}1}^K \sum\limits_{n{\rm{ = }}1}^N \bigg(E{_k^{\mathrm{off}}}[n]+
  E{_{\mathrm{U},k}^{\mathrm{off}}}[n]\bigg)+ \nonumber \\
  &\sum\limits_{k{\rm{ = }}1}^K\sum\limits_{n{\rm{ = }}1}^N\nu_{k,n}\big(B-B{_k^{\mathrm{off}}}[n]-B{_{\mathrm{U},k}^{\mathrm{off}}}[n]\big),
\end{align}
}}
\hspace{-1.3mm}where $\boldsymbol{\nu}=\{\nu_{k,n}\}_{k\in\mathcal{K},n\in\mathcal{N}}$.
The Lagrangian dual function of problem (P1.2) can be presented as
{\small{
\begin{align}\label{Dual_fun_P1_2}
d^{(2)}(\boldsymbol{\nu})=~\underset{\mathbf{B}}{\min}~ \mathcal{L}^{(2)}(\mathbf{B},\boldsymbol{\nu}),~
~\mathrm{s.t.} ~~ \eqref{eq:WSECM1_13},~\eqref{eq:WSECM1_15}.
\end{align}
}}
\hspace{-1.2mm}Hence, the optimal solution of $\mathbf{B}$ with optimal dual variables $\boldsymbol{\nu}^*$ can be obtained by solving  \eqref{Dual_fun_P1_2}.  It is easy to note that the expressions of $E{_k^{\mathrm{off}}}[n]$ and $E{_{\mathrm{U},k}^{\mathrm{off}}}[n]$   have similar structures w.r.t.~$B{_k^{\mathrm{off}}}[n]$ and $B{_{\mathrm{U},k}^{\mathrm{off}}}[n]$, and thus the optimal solution of $B{_k^{\mathrm{off}}}[n]$ and $B{_{\mathrm{U},k}^{\mathrm{off}}}[n]$  should have similar structures according to problem \eqref{Dual_fun_P1_2}.
Next, we will take $B{_k^{\mathrm{off}}}[n]$  as an example to obtain its closed-form optimal solution versus $\nu_{k,n}^*$ for $k\in\mathcal{K}, n\in\mathcal{N}$.
Applying the KKT conditions  \cite{B_Boyd04Convex} leads to the following necessary and sufficient condition of  $B{_k^{\mathrm{off}*}}[n]$:
{\small{
\begin{align}\label{L_B_k_off}
\hspace{-2mm}\frac{\partial\mathcal{L}^{(2)}(\mathbf{B},\boldsymbol{\nu})}{\partial B{_k^{\mathrm{off}*}}[n]}=\nu_{k,n}^*-\frac{l_k^{\mathrm{off}}[n]N_0\ln2}{(B{_k^{\mathrm{off}*}}[n])^2h_k[n]}
2^{\frac{l_k^{\mathrm{off}}[n]}{B{_k^{\mathrm{off}*}}[n]\delta}}=0,
\end{align}
}}
\hspace{-1.2mm}where the optimal dual variable $\nu_{k,n}^*$ should make sure that the equality constraint  $B{_k^{\mathrm{off}*}}[n]+B{_{\mathrm{U},k}^{\mathrm{off}*}}[n]=B$ is satisfied.
It is not easy to obtain the closed-form solution of $B{_k^{\mathrm{off}*}}[n]$ through \eqref{L_B_k_off} directly.
By defining $\xi=\frac{l_k^{\mathrm{off}}[n]}{B{_k^{\mathrm{off}*}}[n]\delta}$, the equation in \eqref{L_B_k_off} can be re-expressed as $\xi^22^\xi=\frac{\nu_{k,n}^*h_k[n]l_k^{\mathrm{off}}[n]}{\delta^2N_0\ln2}\triangleq\Gamma$,
then by applying the natural logarithm at the both sides leads to $\ln\xi+\frac{\ln2}{2}\xi=\ln\Gamma^{\frac{1}{2}}$, which can be equivalently transformed into $\frac{\ln2}{2}\xi e^{\frac{\ln2}{2}\xi}=\frac{\ln2}{2}\Gamma^{\frac{1}{2}}$ by applying the exponential operation,
where $e$ is the base of the natural logarithm. According to the definition and property of Lambert function \cite{J_Corless1996LambertW}, we have $\frac{\ln2}{2}\xi=W_0(\frac{\ln2}{2}\Gamma^{\frac{1}{2}})$, and finally we can express $B{_k^{\mathrm{off}*}}[n]$ as
\begin{align}\label{express_B_k_off}
B{_k^{\mathrm{off}*}}[n]=\frac{\frac{\ln2}{2}l{_{k}^{\mathrm{off}}}[n]}{\delta W_0\big[\frac{\ln2}{2}(\frac{\nu_{k,n}^*h_k[n]l_k^{\mathrm{off}}[n]}{\delta^2N_0\ln2})^{\frac{1}{2}}\big]},~n\in\mathcal{N}_1.
\end{align}
Integrating with the cases $B{_k^{\mathrm{off}*}}[N]=0$, the complete solution of $B{_k^{\mathrm{off}^*}}[n]$ in \eqref{B_k_off} can be obtained. The solution of $B{_{\mathrm{U},k}^{\mathrm{off}*}}[n]$  in \eqref{B_Uk_off}  can be obtained in a similar way.


\bibliographystyle{IEEEtran}
\bibliography{UAV_MEC_Relay}

\end{document}